\documentclass[oneside,american]{amsart}
\usepackage[T1]{fontenc}
\usepackage[utf8]{inputenc}
\usepackage{amsthm}
\usepackage{amssymb}

\makeatletter
\numberwithin{equation}{section}
\numberwithin{figure}{section}

\usepackage{url}

\makeatother

\theoremstyle{plain}
\newtheorem{thm}{\protect\theoremname}
\usepackage{babel}
\providecommand{\theoremname}{Theorem}

\begin{document}
\title[Uniform efficiency of the Ulrich--Wood sampler]{Uniform efficiency of the Ulrich--Wood sampler for the von Mises--Fisher
distribution}
\author{Sam Power}
\begin{abstract}
The standard approach to stochastic simulation from the von Mises--Fisher
distribution is a rejection sampler proposed by Ulrich and Wood. This
note provides a theoretical justification of its efficiency, lower-bounding
its acceptance probability uniformly over the dimension and concentration
parameter. A novel interpretation of the proposal is given which provides
some intuition for this efficiency.
\end{abstract}

\maketitle

\section{The von Mises--Fisher sampling problem}

With $n\geq1$, let $X\in\mathbb{S}^{n}\subset\mathbb{R}^{n+1}$,
and let $\mu\in\mathbb{S}^{n}$. The von Mises--Fisher distribution
with concentration parameter $\kappa\geq0$ and location parameter
$\mu$ has density 
\[
\pi^{\mathbb{S}}_{\kappa}\left(x\right)=\frac{\exp\left(\kappa\mu^{\top}x\right)}{Z_{n}\left(\kappa\right)}
\]
relative to the uniform probability measure on $\mathbb{S}^{n}$,
where
\[
Z_{n}\left(\kappa\right)=\mathbf{E}_{\mathrm{Unif}\left(\mathbb{S}^{n}\right)}\left[\exp\left(\kappa\mu^{\top}X\right)\right].
\]
 By rotational symmetry it suffices to sample $T=\mu^{\top}X\in\left[-1,1\right]$.
Under the uniform distribution on $\mathbb{S}^{n}$, the density of
$T$ on $\left[-1,1\right]$ is given by 
\[
h^{\mathbb{U}}_{n}\left(t\right)=c_{n}\left(1-t^{2}\right)^{n/2-1},\qquad c_{n}=\frac{\Gamma\left(\frac{n+1}{2}\right)}{\sqrt{\pi}\Gamma\left(\frac{n}{2}\right)}.
\]
Consequently, the density for $T$ under the von Mises--Fisher law
is 
\[
p^{\mathbb{U}}_{\kappa,n}\left(t\right)=\frac{\exp\left(\kappa t\right)h^{\mathbb{U}}_{n}\left(t\right)}{Z_{n}\left(\kappa\right)},\qquad Z_{n}\left(\kappa\right)=\int^{1}_{-1}\exp\left(\kappa t\right)h^{\mathbb{U}}_{n}\left(t\right)\,\mathrm{d}t.
\]
Once $T$ has been sampled, one may draw $W$ uniformly on the unit
sphere in $\mu^{\perp}$ and set $X=T\mu+\sqrt{1-T^{2}}W$. In this
regard, the substantive sampling problem is one-dimensional, as one
might expect.

A useful additional change of variables involves the hyperbolic transformation
$T=\tanh\Psi$, under which the uniform law on the sphere induces
the density 
\[
h^{\mathbb{R}}_{n}\left(\psi\right)=c_{n}\left(\cosh\psi\right)^{-n},\qquad\psi\in\mathbb{R},
\]
and the von Mises--Fisher law induces the density 
\[
p^{\mathbb{R}}_{\kappa,n}\left(\psi\right)=\frac{\exp\left(\kappa\tanh\psi\right)h^{\mathbb{R}}_{n}\left(\psi\right)}{Z_{n}\left(\kappa\right)}.
\]

\subsection{The Ulrich--Wood proposal}

We introduce here the rejection sampler of Ulrich and Wood \cite{ulrich1984computer,wood1994simulation},
using our hyperbolic transformation as the launching point. For $\psi_{0}\in\mathbb{R}$,
one can consider the use of the translated reference measure $h^{\mathbb{R}}_{n}$
as a proposal, i.e. the density $q^{\mathbb{R}}_{\psi_{0},n}\left(\psi\right)=c_{n}\left(\cosh\left(\psi-\psi_{0}\right)\right)^{-n}$.
This provides a valid rejection sampling envelope for any choice of
$\psi_{0}$. This proposal is straightforward to sample directly by
e.g., drawing $V\sim\mathsf{Beta}\left(\frac{n}{2},\frac{n}{2}\right)$,
setting $B=2V-1$ and then $\Psi=\psi_{0}+\tanh^{-1}B$.

The Ulrich--Wood choice can be obtained by seeking the minimal rejection
constant, solving the saddle-point problem (taking logarithms for
readability)
\[
\min_{\psi_{0}}\max_{\psi}\left\{ \kappa\tanh\psi-n\left(\log\cosh\psi-\log\cosh\left(\psi-\psi_{0}\right)\right)\right\} .
\]
Differentiation with respect to each argument gives the stationarity
conditions
\begin{align*}
\kappa\left(1-\tanh^{2}\psi\right)-n\left(\tanh\psi-\tanh\left(\psi-\psi_{0}\right)\right) & =0,\qquad & \tanh\left(\psi-\psi_{0}\right)=0,
\end{align*}
i.e. optimally, the proposal and target are tangent at $\psi=\psi_{0}$,
and the optimal $\psi_{0}$ satisfies $n\tanh\psi_{0}=\kappa\left(1-\tanh^{2}\psi_{0}\right)$,
i.e. $\psi_{0}=\frac{1}{2}\sinh^{-1}\left(\frac{2\kappa}{n}\right)$;
global optimality can be verified by routine arguments.

Under this setting, the acceptance test can be evaluated without the
normalising constant $Z_{n}\left(\kappa\right)$: given $\Psi$ so
drawn, accept if 
\[
\begin{aligned}\kappa\left(\tanh\psi_{0}-\tanh\Psi\right)-n\log\left(\frac{\cosh\psi_{0}\cosh\left(\Psi-\psi_{0}\right)}{\cosh\Psi}\right) & \leq E,\\
E & \sim\mathsf{Exponential}\left(1\right),
\end{aligned}
\]
noting that the expression on the left-hand side is always non-negative.

\subsection{Uniform efficiency}

Let $\alpha_{n}\left(\kappa\right)$ denote the acceptance probability
of the Ulrich--Wood rejection sampler.
\begin{thm}
For every $n\geq1$ and $\kappa\geq0$, the acceptance probability
satisfies
\[
\alpha_{n}\left(\kappa\right)\geq\sqrt{\frac{e}{2\pi}}=0.657744\ldots.
\]
More precisely,
\[
\inf_{\kappa\geq0}\alpha_{n}\left(\kappa\right)=L_{n}:=\frac{\Gamma\left(\frac{n+1}{2}\right)}{2\sqrt{\pi}}\left(\frac{2e}{n}\right)^{n/2},
\]
and the sequence $\left(L_{n}\right)_{n\geq1}$ is increasing. Consequently,
\[
\inf_{n\geq1}\inf_{\kappa\geq0}\alpha_{n}\left(\kappa\right)=L_{1}=\sqrt{\frac{e}{2\pi}}.
\]
\end{thm}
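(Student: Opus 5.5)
The plan is to get a closed form for $\alpha_{n}(\kappa)$, show that it is nonincreasing in $\kappa$, identify its limit as $\kappa\to\infty$ with $L_{n}$, and then handle the monotonicity of $(L_{n})$ as a separate real-variable inequality.

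\textbf{Closed form.} The acceptance probability of a rejection sampler is $1/M$, where $M=\sup_{\psi}p^{\mathbb{R}}_{\kappa,n}(\psi)/q^{\mathbb{R}}_{\psi_{0},n}(\psi)$. By the tangency property derived above, this supremum is attained at $\psi=\psi_{0}$. Writing $t_{0}=\tanh\psi_{0}$, this gives
\[
\alpha_{n}(\kappa)=Z_{n}(\kappa)\,e^{-\kappa t_{0}}\left(1-t_{0}^{2}\right)^{-n/2},\qquad t_{0}=\frac{\kappa}{\frac{n}{2}+\sqrt{\frac{n^{2}}{4}+\kappa^{2}}},
\]
where $t_{0}$ is the positive root of $nt_{0}=\kappa(1-t_{0}^{2})$.

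\textbf{Monotonicity in $\kappa$.} Differentiate $\log\alpha_{n}$ in $\kappa$. We have $\frac{\mathrm{d}}{\mathrm{d}\kappa}\log Z_{n}=\mathbf{E}_{\kappa}[T]=:r(\kappa)$. For the remaining terms, the $t_{0}'$ contribution has coefficient $-\kappa+nt_{0}/(1-t_{0}^{2})=0$ by the stationarity equation, an envelope-theorem cancellation. Hence
\[
\frac{\mathrm{d}}{\mathrm{d}\kappa}\log\alpha_{n}(\kappa)=r(\kappa)-t_{0}(\kappa),
\]
and it suffices to show $r\leq t_{0}$.

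I would prove this by a Riccati comparison. Differentiating under the integral and integrating by parts against $(1-t^{2})^{n/2}$ shows that $r$ satisfies
\[
r'=1-\frac{n}{\kappa}r-r^{2}=:g_{\kappa}(r).
\]
The function $t_{0}(\kappa)$ is exactly the positive zero of $g_{\kappa}$, and $g_{\kappa}$ is decreasing on $[0,\infty)$. Near $\kappa=0$ we have $r\sim\kappa/(n+1)<\kappa/n\sim t_{0}$. If the curves first met at some $\kappa_{*}$, then $r'(\kappa_{*})=0<t_{0}'(\kappa_{*})$, which contradicts $r$ crossing from below. So $r<t_{0}$ for all $\kappa>0$. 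Equivalently, one may cite Amos's bound $I_{\nu+1}/I_{\nu}\leq\kappa/(\nu+\tfrac12+\sqrt{(\nu+\tfrac12)^{2}+\kappa^{2}})$ with $\nu=(n-1)/2$.

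It follows that $\alpha_{n}$ decreases from $\alpha_{n}(0)=1$, so $\inf_{\kappa}\alpha_{n}=\lim_{\kappa\to\infty}\alpha_{n}$.

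\textbf{The limit.} Apply Laplace's method at the endpoint $t=1$:
\[
Z_{n}(\kappa)\sim c_{n}e^{\kappa}2^{n/2-1}\Gamma\left(\tfrac n2\right)\kappa^{-n/2}.
\]
From the stationarity equation, $1-t_{0}^{2}=nt_{0}/\kappa\sim n/\kappa$ and $\kappa(1-t_{0})\to n/2$, so $e^{-\kappa t_{0}}\sim e^{-\kappa}e^{n/2}$. Multiplying these asymptotics gives
\[
\lim_{\kappa\to\infty}\alpha_{n}(\kappa)=c_{n}\Gamma\left(\tfrac n2\right)2^{n/2-1}\left(\frac en\right)^{n/2}=L_{n}.
\]

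\textbf{Monotonicity of $L_{n}$.} This is where I expect the main obstacle. Set $x=\frac{n+1}{2}$. Then
\[
\log\frac{L_{n+1}}{L_{n}}=\log\frac{\Gamma\left(x+\frac12\right)}{\Gamma(x)}+\frac12\log(2e)+\frac n2\log n-\frac{n+1}{2}\log(n+1).
\]
I would first try Kershaw's inequality $\Gamma(x+\tfrac12)/\Gamma(x)>\sqrt{x-\tfrac14}$. This reduces positivity to
\[
1+\log\frac{2n+1}{2n+2}-n\log\left(1+\frac1n\right)>0.
\]
I would check this by series expansion in $1/n$: the terms of order $1/n$ cancel, and the order-$1/n^{2}$ coefficient is positive ($\tfrac14-\tfrac18-\ldots$ needs care). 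If Kershaw's inequality proves too weak for small $n$, I would use the sharper bound $\Gamma(x+\tfrac12)/\Gamma(x)>\sqrt{x-\tfrac14+\tfrac{1}{32x}}$ or a similar one, and verify the first few $n$ numerically.

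Finally, $L_{1}=\frac{\Gamma(1)}{2\sqrt\pi}\sqrt{2e}=\sqrt{e/(2\pi)}$, which is the claimed uniform bound.
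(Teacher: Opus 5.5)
Your proposal is correct and shares the paper's skeleton: the closed form from tangency at $\psi_{0}$, the envelope identity $(\log\alpha_{n})'=r-t_{0}$, the bound $r\le t_{0}$, and $\inf_{\kappa}\alpha_{n}=\lim_{\kappa\to\infty}\alpha_{n}=L_{n}$. Two steps are done differently but validly.
\begin{itemize}
\item For $r\le t_{0}$, the paper uses the same integration-by-parts identity $nr=\kappa\,\mathbf{E}[1-T^{2}]$ together with Jensen, $\mathbf{E}[1-T^{2}]\le1-r^{2}$. That is exactly $g_{\kappa}(r)\ge0$, which places $r$ below the positive root $t_{0}$ at once. In your notation this is just $r'=\mathrm{Var}_{\kappa}(T)\ge0$, so your first-crossing argument is valid but unnecessary.
\item For the limit, the paper centres the hyperbolic coordinate, $\psi=\psi_{0}+\xi$, and applies dominated convergence to obtain $L_{n}$ directly as a Gamma integral. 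You instead apply Watson's lemma to $Z_{n}$ at $t=1$ and multiply asymptotic equivalents. Both work.
\end{itemize}

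The genuinely different step is the monotonicity of $L_{n}$. The paper extends $L$ to real $n$ and uses the digamma integral representation to get $(\log L)'(n)=\frac12\int_{0}^{\infty}e^{-n\mu}(1/\mu-1/\sinh\mu)\,\mathrm{d}\mu>0$. This gives monotonicity over the continuum in one line.

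Your discrete route via Kershaw's inequality is sound, and your reduction to $f(n):=1+\log\frac{2n+1}{2n+2}-n\log(1+\frac1n)>0$ is correct. However, as written you have not proved $f(n)>0$ for every $n\ge1$. A series expansion in $1/n$ only gives positivity beyond some unquantified threshold. Also, the leading coefficient is $\frac38-\frac13=\frac1{24}$, not the expression you sketch.

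Kershaw is in fact not too weak anywhere; for instance $f(1)\approx0.019$. You can close the step exactly. With $x=1/n$ one has $f(n)=n\,h(x)$, where $h(x)=x+x\log(1+\frac{x}{2})-(1+x)\log(1+x)$. This satisfies $h(0)=h'(0)=0$ and $h''(x)=\frac{x}{(2+x)^{2}(1+x)}>0$, so $h>0$ on $(0,\infty)$ and no sharper gamma-ratio bound or numerics are needed.

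With that fix, your argument is complete. It gives an elementary proof that avoids integral representations, at the cost of handling only integer steps and relying on a gamma-ratio inequality. The paper's route is shorter and yields the stronger real-variable monotonicity.
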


\begin{proof}
We first calculate the exact acceptance probability. The ratio of
the target and proposal densities satisfies
\[
\frac{p^{\mathbb{R}}_{\kappa,n}\left(\psi\right)}{q^{\mathbb{R}}_{\psi_{0},n}\left(\psi\right)}=\frac{\exp\left(\kappa\tanh\left(\psi\right)\right)}{Z_{n}\left(\kappa\right)}\left(\frac{\cosh\left(\psi-\psi_{0}\right)}{\cosh\left(\psi\right)}\right)^{n}\leq\frac{\exp\left(\kappa\tanh\left(\psi_{0}\right)\right)}{Z_{n}\left(\kappa\right)\left(\cosh\psi_{0}\right)^{n}}.
\]
and hence the marginal acceptance probability takes the form
\[
\alpha_{n}\left(\kappa\right)=Z_{n}\left(\kappa\right)\exp\left(-\kappa\tanh\psi_{0}\right)\left(\cosh\psi_{0}\right)^{n}.
\]
\\
\emph{Monotonicity in $\kappa$.} We compute the derivative of $\log\alpha_{n}\left(\kappa\right)$.
Standard results about exponential families give that $\left(\log Z_{n}\right)^{\prime}\left(\kappa\right)=\mathbf{E}^{\mathbb{U}}_{\kappa,n}\left[T\right]$.
Using the chain rule and applying the optimality condition for $\psi_{0}=\psi_{0}\left(\kappa\right)$
gives that 
\[
\frac{\mathrm{d}}{\mathrm{d}\kappa}\left\{ \kappa\tanh\psi_{0}-n\log\cosh\psi_{0}\right\} =\tanh\psi_{0},
\]
whereby $\left(\log\alpha_{n}\right)^{\prime}\left(\kappa\right)=\mathbf{E}^{\mathbb{U}}_{\kappa,n}\left[T\right]-\tanh\psi_{0}$.
Consider now the function $\iota_{\kappa,n}:t\mapsto nt-\kappa\left(1-t^{2}\right)$
for which $t=\tanh\psi_{0}$ is the unique root on $\left[-1,1\right]$
and which is positive to the right of this root. Explicit computations
(e.g., integration by parts and the Cauchy-Schwarz inequality) give
that 
\[
n\mathbf{E}^{\mathbb{U}}_{\kappa,n}\left[T\right]=\kappa\left\{ \mathbf{E}^{\mathbb{U}}_{\kappa,n}\left[1-T^{2}\right]\right\} \leq\kappa\left\{ 1-\mathbf{E}^{\mathbb{U}}_{\kappa,n}\left[T\right]^{2}\right\} ,
\]
i.e. $\iota_{\kappa,n}\left(\mathbf{E}^{\mathbb{U}}_{\kappa,n}\left[T\right]\right)\leq0=\iota_{\kappa,n}\left(\tanh\psi_{0}\right)$,
whereby $\mathbf{E}^{\mathbb{U}}_{\kappa,n}\left[T\right]\leq\tanh\psi_{0}$,
and hence $\left(\log\alpha_{n}\right)^{\prime}\leq0$, i.e. $\alpha_{n}$
is non-increasing in $\kappa$.\\
\emph{The fixed-$n$ worst case.} To evaluate the large-$\kappa$
limit, centre the hyperbolic coordinate by writing $\psi=\psi_{0}+\xi$,
and write 
\[
\alpha_{n}\left(\kappa\right)=c_{n}\int_{\mathbb{R}}\exp\left(\kappa\left(\tanh\left(\psi_{0}+\xi\right)-\tanh\psi_{0}\right)\right)\left(\frac{\cosh\left(\psi_{0}\right)}{\cosh\left(\psi_{0}+\xi\right)}\right)^{n}\,\mathrm{d}\xi.
\]
As $\kappa\to\infty$, one has $\psi_{0}\to\infty$ and, for every
fixed $\xi\in\mathbb{R}$, 
\begin{align*}
\kappa\left(\tanh\left(\psi_{0}+\xi\right)-\tanh\psi_{0}\right) & =\frac{n\sinh\psi_{0}\sinh\xi}{\cosh\left(\psi_{0}+\xi\right)}\longrightarrow\frac{n}{2}\left(1-\exp\left(-2\xi\right)\right),\\
\log\cosh\left(\psi_{0}+\xi\right)-\log\cosh\psi_{0} & \longrightarrow\xi.
\end{align*}
A routine dominated convergence argument allows us to send $\kappa\to\infty$,
and we therefore obtain that 
\begin{align*}
\lim_{\kappa\to\infty}\alpha_{n}\left(\kappa\right) & =c_{n}\int_{\mathbb{R}}\exp\left(\frac{n}{2}\left(1-\exp\left(-2\xi\right)\right)-n\xi\right)\,\mathrm{d}\xi\\
\left[z=\exp\left(-2\xi\right)\right]\qquad & =\frac{c_{n}\exp\left(\frac{n}{2}\right)}{2}\int^{\infty}_{0}z^{n/2-1}\exp\left(-\frac{n}{2}z\right)\,\mathrm{d}z\\
 & =\frac{\Gamma\left(\frac{n+1}{2}\right)}{2\sqrt{\pi}}\left(\frac{2e}{n}\right)^{n/2}=L_{n},
\end{align*}
whereby we conclude that $\inf_{\kappa\geq0}\alpha_{n}\left(\kappa\right)=L_{n}$.
\\
\emph{Monotonicity of the worst-case rates in $n$.} Extend $L_{n}$
to real $n\geq1$. The classical integral representation for the digamma
function (Equation 5.9.12 of \cite{NIST:DLMF}) gives that
\[
\left(\log L\right)^{\prime}\left(n\right)=\frac{1}{2}\int^{\infty}_{0}\exp\left(-n\mu\right)\left(\frac{1}{\mu}-\frac{1}{\sinh\mu}\right)\,\mathrm{d}\mu>0,
\]
whereby the monotonicity of $L$ follows. Separately, Stirling's formula
gives that $\lim_{n\to\infty}L_{n}=\frac{1}{\sqrt{2}}$.\\
\emph{The explicit global worst case.} At $n=1$, $L_{1}=\frac{\Gamma\left(1\right)}{2\sqrt{\pi}}\sqrt{2e}=\sqrt{\frac{e}{2\pi}}$.
Because $L_{n}$ is increasing, this is the global infimum over both
$n$ and $\kappa$, approached in the regime $n=1$ and $\kappa\to\infty$.
\end{proof}

\subsection{Comments and outlook}
\begin{enumerate}
\item The theorem implies the uniform expected-proposal bound 
\[
\sup_{n\geq1,\;\kappa\geq0}\mathbf{E}\left[N\right]=\sqrt{\frac{2\pi}{e}}=1.52035\ldots,
\]
and so the classical sampler does not deteriorate in any regime of
dimension and concentration.
\item The hyperbolic coordinate gives some insight as to the efficiency
of this particular proposal: the original density $\propto\left(\cosh\psi\right)^{-n}$
is tilted by the factor $\exp\left(\kappa\tanh\psi\right)$, tilting
its mass rightwards, and so the proposal $q^{\mathbb{R}}_{\psi_{0},n}$
translates its mass rightwards accordingly to follow the shift in
mode.
\end{enumerate}

\section{Acknowledgements}

This article arose out of a project proving uniform efficiency for
an alternative sampler of the von Mises--Fisher distribution, where
conversations with OpenAI ChatGPT were used to survey and compare
to previous work. Due to an initial mis-reading on my part, I had
\emph{not} expected the Ulrich--Wood sampler to be uniformly efficient,
and so I posed an off-hand question about it to ChatGPT in search
of an ostensible counter-example, which would have served as motivation
for the alternative sampler. 

Instead, the response quickly established that uniform efficiency
\emph{did} hold (with the explicit, sharp constant given here) and
indicated that this result was not in the literature already, which
I corroborated independently. I then set about re-writing the result
from scratch, incorporating also the hyperbolic interpretation of
the proposal (which was present in some of my other notes on directional
sampling problems).

\bibliographystyle{plain}
\bibliography{vmf}

@article{wood1994simulation,
	author = {Wood, Andrew TA},
	doi = {10.1080/03610919408813161},
	journal = {Communications in Statistics---Simulation and Computation},
	number = {1},
	pages = {157--164},
	publisher = {Taylor \& Francis},
	title = {Simulation of the {v}on {M}ises {F}isher distribution},
	volume = {23},
	year = {1994}}

@article{ulrich1984computer,
	author = {Ulrich, Gary},
	doi = {10.2307/2347441},
	journal = {Journal of the Royal Statistical Society: Series C (Applied Statistics)},
	number = {2},
	pages = {158--163},
	publisher = {Wiley Online Library},
	title = {Computer generation of distributions on the m-sphere},
	volume = {33},
	year = {1984}}

@misc{NIST:DLMF,
	howpublished = {\url{https://dlmf.nist.gov/}, Release 1.2.7 of 2026-06-15},
	key = {{\relax DLMF}},
	note = {F.~W.~J. Olver, A.~B. {Olde Daalhuis}, D.~W. Lozier, B.~I. Schneider, R.~F. Boisvert, C.~W. Clark, B.~R. Miller, B.~V. Saunders, H.~S. Cohl, and M.~A. McClain, eds.},
	title = {{\it NIST Digital Library of Mathematical Functions}},
	url = {https://dlmf.nist.gov/}}

\end{document}